\documentclass[11pt]{article}
\usepackage{amsmath,amssymb,amsthm,amsfonts,latexsym,bbm,xspace,graphicx,float,mathtools,mathdots,cancel}
\usepackage{braket,caption,subcaption,ellipsis,textcomp,MnSymbol}
\usepackage[colorlinks,citecolor=blue,bookmarks=true]{hyperref}
\usepackage[nameinlink]{cleveref}
\crefname{ineq}{inequality}{inequalities}
\usepackage{url}
\creflabelformat{ineq}{#2{\upshape(#1)}#3}
\usepackage[letterpaper,margin=1in]{geometry}
\usepackage{enumitem}

\usepackage{atbegshi}
\usepackage[explicit]{titlesec}
\usepackage[ruled,vlined,linesnumbered]{algorithm2e}
\newtheorem{theorem}{Theorem}[section]
\newtheorem{lemma}[theorem]{Lemma}

\newtheorem{conjecture}[theorem]{Conjecture}
\newtheorem{question}[theorem]{Question}
\newtheorem{problem}[theorem]{Problem}
\AddToHook{env/lemma/begin}
  {\crefalias{theorem}{lemma}}
\AddToHook{env/corollary/begin}
  {\crefalias{theorem}{corollary}}
\AddToHook{env/definition/begin}
  {\crefalias{theorem}{definition}}
\AddToHook{env/remark/begin}
  {\crefalias{theorem}{remark}}
\AddToHook{env/proposition/begin}
  {\crefalias{theorem}{proposition}}
\AddToHook{env/observation/begin}
  {\crefalias{theorem}{observation}}
\AddToHook{env/conjecture/begin}
  {\crefalias{theorem}{conjecture}}
\AddToHook{env/question/begin}
  {\crefalias{theorem}{question}}
\AddToHook{env/problem/begin}
  {\crefalias{theorem}{problem}}

\newcommand\sF{{\mathcal{F}}}
\newcommand\sfY{{\mathsf{Y}}}
\newcommand\sfZ{{\mathsf{Z}}}

\newcommand\E{{\mathbb{E}}}
\newcommand\N{{\mathbb{N}}}

\newcommand\remove[1]{{}}

\crefname{question}{question}{questions}
\crefname{problem}{problem}{problems}

\newcommand{\email}[1]{\href{mailto:#1}{\texttt{#1}}}

\title{An algorithm for $k$-set cover}
\author{Josh Alman\thanks{Columbia University. \email{josh@cs.columbia.edu}. Supported in part by NSF Grant CCF-2238221 and a Packard Foundation Fellowship.} \and
	Baitian Li\thanks{Columbia University. \email{bl3052@columbia.edu}. Supported in part by NSF Grant CCF-2238221, a Packard Foundation Fellowship, and a Columbia SEAS Presidential Fellowship.} \and
	Kevin Pratt\thanks{Columbia University. \email{ktp2116@columbia.edu}. Supported in part by NSF Grant CCF-2238221 and a Packard Foundation Fellowship.}}
\date{}

\begin{document}

\maketitle
\begin{abstract}
We show that set cover on a universe of size $n$ and with sets of size at most $k$ can be solved in time $2^{(1-1/k+O(1/k^{3/2}))n}$. This improves on a $2^{(1-0.929/k)n}$-time algorithm of Bj\"orklund (STACS 2010) for all sufficiently large $k$.
\end{abstract}

\section{Introduction}
In this paper we study algorithms for the following problem:
\begin{problem}[\textsc{$k$-set cover}]\label{prob:sc}
	Given as input an integer $t$ and a set family $\sF \subseteq 2^{[n]}$ consisting of sets of size at most $k$, decide if there exist at most $t$ sets in $\sF$ that cover $[n]$.
\end{problem}

Current algorithms for \Cref{prob:sc} run in time $O^*(2^{(1-f(k))n})$\footnote{We use $O^*$ to suppress polynomial factors in $n$.}, where $f(k) \to 0$ as $k \to \infty$ \cite{Koivisto09,bjorklund2010exact}. In \cite{SetCoverConj}, it was shown that if such a running time is inherent, meaning \textsc{$k$-set cover} cannot be solved in time $(2-\varepsilon)^n$ for some $\varepsilon$ independent of $k$, then our current fastest algorithms for a host of other problems, including subset sum, Steiner tree, and connected vertex cover, are essentially optimal. Notably, there is no known hardness of these problems under other popular conjectures in fine-grained complexity such as the Strong Exponential Time Hypothesis. Thus our interest is in the following conjecture:

\begin{conjecture}[Set cover conjecture]\label{conj:scc}
For all $\varepsilon > 0$, there exists $k \in \N$ such that $k$-set cover cannot be solved in time $O((2-\varepsilon)^n)$.
\end{conjecture}

One potential approach towards refuting \Cref{conj:scc} comes from a recent line of work \cite{bjorklund2024asymptotic, pratt2024stronger} which showed that Strassen's asymptotic rank conjecture, a generalization of the popular conjecture that $\omega = 2$, would imply that \textsc{$k$-set cover} can be solved in time $1.89^n$, therefore refuting \Cref{conj:scc}. This raises a basic question: what is the fastest \emph{unconditional} algorithm for \Cref{prob:sc}? That is, what is the asymptotically largest $f(k)$ such that we can solve $k$-set cover in time $2^{(1-f(k))n}$?

As discussed below, it turns out that current approaches yield $f(k) = C/k$ for some constant $C < 1$. Could we hope to improve this to $f(k) = (\log k) /k$, or even just $\omega(1)/k$? Less ambitiously, what about $100/k$? Even less ambitiously, what about $1/k$? We will show that at least this least-ambitious goal is (essentially) achievable.

\paragraph{Related work.}
The fastest algorithm explicitly stated for solving \Cref{prob:sc} is due to Koivisto \cite{Koivisto09}, and runs in deterministic time $O^*(2^{(1-1/(2k))n})$ and uses exponential space. This is the usual reference for the fastest algorithm for \textsc{$k$-set cover} (see e.g.~\cite{krauthgamer_et_al:LIPIcs.STACS.2019.45, nederlof2026invitation}). This algorithm can be understood as improving on a straightforward $2^n$-time dynamic program by reducing its state space.

In a later work, Bj\"orklund~\cite{bjorklund2010exact} designed a randomized algorithm for the related problem of \textsc{exact $k$-cover}, where all input sets have size exactly $k$ and we must find a partition of $[n]$. In fact, Bj\"orklund's algorithm may also be modified\footnote{\label{footnote}In short, one may replace $\sF$ with its downward closure to turn the cover into a partition, and find that the worst case in Bj\"orklund's algorithm has $k$-sized blocks, so this does not make the running time worse. The specific constant $0.92921\ldots$ comes from substituting the optimal $\tau_{12} = \sqrt{3}/2$, $\tau_2 = 1/2$ into \cite[Equation (4.11)]{bjorklund2010exact}. We do not elaborate further as we will achieve an improved running time here.} to solve the $k$-set cover problem in $O(2^{(1-0.9292/k)n})$ time and polynomial space. This algorithm works by reducing to the special case when the set family is $k$-partite, which is then solved in time $O^*(2^{(1-2/k)n})$ by an algebraic approach.

We also note that the complexity landscape of \textsc{$k$-set cover} resembles that of \textsc{$k$-SAT}, for which the fastest known algorithms run in time $2^{(1-(c-o_k(1))/k)n}$. The PPZ algorithm first achieved this form with $c=1$ \cite{PPZ99SAT}; Sch\"oning improved the constant to $c=\log_2 \mathrm{e}\approx 1.443$ \cite{Sch02SAT}; and the PPSZ algorithm further improved it to $c=\pi^2/6\approx 1.645$ \cite{PPSZ05SAT}. Obtaining an unbounded function of $k$ in place of $c$ remains an intriguing open problem, and would refute the ``Super-Strong Exponential Time Hypothesis'' \cite{VW19SSETH}.

\paragraph{Our results.}
We give a simple algorithm which improves on prior work for all large $k$:
\begin{theorem}\label{thm:main}
	There is a randomized algorithm solving \textsc{$k$-set cover} in time $O(2^{(1-\delta_k) n})$ and using exponential space, where $\delta_k = 1/k - O(1/k^{3/2})$.
\end{theorem}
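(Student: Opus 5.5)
We plan to combine two regimes: a dynamic program whose state space is restricted to ``reachable'' unions, and an algebraic routine that is fast when the instance is nearly $k$-partite, and to balance them so the savings come to essentially $n/k$ bits. First we normalize. Replacing $\sF$ by its downward closure does not change the answer, and it lets us look for a \emph{partition} of $[n]$ into at most $t$ blocks from $\sF$. Blocks of size less than $k$ can be grouped or padded, with polynomial overhead, so the worst case has roughly $n/k$ blocks, each of size about $k$.

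Next we apply random coloring. Pick a uniformly random map $c:[n]\to[k]$. Call a partition \emph{well-colored} if every block receives $k$ distinct colors. For a fixed solution this happens with probability $(k!/k^k)^{n/k}\approx e^{-n}\cdot\mathrm{poly}$, which is far too costly. So we do not demand rainbow blocks. We only ask that the color classes be close to balanced and use a relaxed notion. Each color class $C_i$ has size about $n/k$. A solution partition then induces, on the color classes, a structure we can process with a DP over tuples of subsets. The state is $(S_1,\dots,S_k)$ with $S_i\subseteq C_i$, and we add blocks in a canonical order: always cover the smallest uncovered element of $C_1$ first. The key point is that after processing $j$ blocks, a canonical-order DP has the property that $S_1$ is a prefix of $C_1$. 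That removes essentially all $2^{|C_1|}=2^{n/k}$ freedom in the first class, giving a state space of size about $\mathrm{poly}(n)\cdot 2^{n-n/k}$. This is the Koivisto-style reduction, strengthened.

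Koivisto's bound only achieves $1/(2k)$, so the crux is to avoid losing a factor on blocks that miss $C_1$. Our plan is to order the elements so that every block of the solution meets the ``prefix'' set. Concretely, we guess or sample a set $R$ of $n/k$ elements hitting each solution block exactly once, i.e.\ a system of representatives. We then run the prefix DP with respect to $R$: states are (prefix of $R$, arbitrary subset of $[n]\setminus R$), giving $2^{n-n/k}$ states. Each transition adds the block containing the next representative. A random $R$ contains exactly one element from each of the $n/k$ blocks with probability only $\approx (k\cdot(1/k)(1-1/k)^{k-1})^{n/k}\approx e^{-n/k}$. To fix this, we take $R$ of density $p$ slightly different from $1/k$ and allow blocks to contain $0$ or several elements of $R$. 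Blocks containing $0$ elements of $R$ are handled by an auxiliary ``free'' move, tracked via a counter. Blocks containing several are indexed by their minimal $R$-element, and the prefix then skips the others, which is still a prefix-determined state. The cost is the number of reachable states, roughly $\binom{n-|R|}{\cdot}$ times the entropy of the configurations of $R$-elements not yet in the prefix.

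The main obstacle is this optimization. We must bound the number of reachable states of the form (covered part of $R$, covered part of the rest) when only minima are forced in order. We must also show that with $|R|\approx n/k + O(n/k^{3/2})$, the sum of the success-probability loss and the state-space entropy yields exponent $1-1/k+O(k^{-3/2})$. We expect the $k^{-3/2}$ term to come from concentration: the fluctuation of the number of $R$-hits per block is $\Theta(\sqrt{n/k}\cdot\ldots)$, and optimizing the density tradeoff gives $\sqrt{1/k}$ relative error on the $1/k$ saving. We would first compute this tradeoff for the model in which each block has exactly one $R$-element, and then perturb.
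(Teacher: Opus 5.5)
\textbf{There is a genuine gap.} Your proposal does not prove the theorem. The step that is supposed to produce the $n/k$ savings is exactly the one you defer as ``the main obstacle,'' and the mechanism you sketch for it does not work.

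Suppose every solution block meets $R$ in at most one element. Then the covered part of $R$ really is a prefix, and the DP over (prefix index, subset of $[n]\setminus R$) runs in $O^*(2^{n-|R|})$. But you then pay for the probability that a random $R$ has this property. With $|R|=cn/k$ that probability is about $\bigl((1+c)e^{-c}\bigr)^{n/k}$, so the total time is $2^{(1-g(c)/k)n}$ with $g(c)=c-c\log_2 e+\log_2(1+c)$. Its maximum is $\max_c g(c)\approx 0.70$, which is below Bj\"orklund's $0.929$, let alone $1$.

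To do better you must allow blocks with several $R$-elements, and then your assertion that the state stays ``prefix-determined'' is false. A block inserted at its minimal $R$-element also covers later elements of $R$. So the covered part of $R$ is a prefix together with an arbitrary subset of the suffix, and the number of such subsets is exactly what erodes the savings in Koivisto-style state-space reductions, which only reach $n/(2k)$. You give no bound on this state space.

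Your explanation of the $k^{-3/2}$ term is also off: fluctuations of order $\sqrt{n}$ or $n^{2/3}$ affect only the $o(n)$ part of the exponent. Likewise, ``grouping or padding'' small blocks is not an algorithmic reduction, since you do not know the solution.

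\textbf{The missing idea.} The paper gives up on one representative per block and guesses a \emph{trace} on a sample instead of ordering along a prefix.
\begin{itemize}
\item It samples $S$ at density $\Theta((\log k)/k)$, with $q^k=(1-p)^k=\sqrt{2\ln 2/k}$. Using $(1-q^j)/j\ge(1-q^k)/k$, at least about $(1-q^k)n/k$ blocks then meet $S$, often in several elements.
\item Blocks missing $S$, of total size about $q^kn$, form $P_1$. The blocks meeting $S$ are split at random into $P_2,P_3$, and the leftover blocks missing $S$ are used to rebalance.
\item Since each block meeting $S$ is nonempty on $S$, distinct splits have distinct traces on $S$. Hence a uniformly random $A\subseteq S$ equals $P_2\cap S$ for some balanced split with probability at least $2^{(1-q^k)n/k-|S|-o(n)}$.
\item For each guess, a three-fold subset convolution over $[n]\setminus S$ costs $O^*(2^{n-|S|})$. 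The total is therefore $2^{(1-(1-q^k)/k)n+o(n)}$: the savings equals the number of blocks meeting $S$, however many $S$-elements each contains.
\item The remaining cost is precomputing the families of disjoint unions of the prescribed sizes, which takes $2^{H((1-q^k)/2)n+o(n)}<2^{(1-q^{2k}/(2\ln 2))n+o(n)}$. Balancing this against the $q^k/k$ loss forces $q^k=\Theta(k^{-1/2})$ and yields $\delta_k=1/k-O(k^{-3/2})$.
\end{itemize}
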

Our algorithm essentially applies fast subset convolution \cite{bjorklund2007fourier} on certain random restrictions of the universe. A simple probabilistic argument will ensure that any solution is preserved by such a restriction with sufficiently high probability.

This leaves open the natural question:
\begin{question}
Can we solve \textsc{$k$-set cover} in time $2^{(1-1.00001/k)n}$?
\end{question}

\section{Preliminaries}

We use $H(p) = -p \log_2 p - (1-p) \log_2 (1-p)$ to denote the binary entropy function.

For a set family $\sF \subseteq 2^{[n]}$, we define its downward closure $\sF_\downarrow$ as the set family consisting of all subsets of sets in $\sF$:
\[\sF_\downarrow \coloneq \{X \subseteq [n] : X \subseteq Y \text{ for some } Y \in \sF\}.\]

Given two families $\sF_1, \sF_2 \subseteq 2^{[n]}$, let $\sF_1 \ostar \sF_2$ denote their \emph{subset convolution}:
\[\sF_1 \ostar \sF_2 \coloneq \{X \subseteq [n] : X = A \sqcup B \text{ for } A \in \sF_1, B \in \sF_2\}\]
We will use the following classical result of Bj\"orklund, Husfeldt, Kaski and Koivisto \cite{bjorklund2007fourier}; the subset convolution can be computed in near-optimal time.
\begin{theorem}[Fast subset convolution]\label{lem:fsc}
	Given as input $\sF_1, \sF_2 \subseteq 2^{[n]}$, we can compute $\sF_1 \ostar \sF_2$ in time $O^*(2^n)$.
\end{theorem}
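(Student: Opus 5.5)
We prove the fast subset convolution statement directly, following Björklund, Husfeldt, Kaski and Koivisto. The idea is to replace the disjoint-union condition $X = A \sqcup B$ by two conditions that are easy to compute: $A \cup B = X$ and $|A| + |B| = |X|$. These are equivalent to disjointness with union $X$, because $|A \cup B| = |A| + |B| - |A \cap B|$.

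The first step is to encode the families as indicator functions $f_1, f_2 : 2^{[n]} \to \{0,1\}$ and to grade them by cardinality. For $i \in \{1,2\}$ and $0 \le j \le n$, set $f_i^{(j)}(S) = f_i(S)$ if $|S| = j$ and $0$ otherwise.

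Next, compute the zeta (upward-closed sum) transform of each graded piece,
\[\hat f_i^{(j)}(Y) = \sum_{S \subseteq Y} f_i^{(j)}(S).\]
Yates' algorithm does this in $O(n 2^n)$ additions per function: for $\ell = 1, \dots, n$ in turn, add $g(Y \setminus \{\ell\})$ into $g(Y)$ for every $Y \ni \ell$. Over all $2(n+1)$ functions this costs $O^*(2^n)$ ring operations. All values are integers bounded by $2^n$, so each operation costs polynomial time.

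Then, for each $Y$ and each $m \le n$, form the pointwise product
\[\hat h^{(m)}(Y) = \sum_{j=0}^{m} \hat f_1^{(j)}(Y)\, \hat f_2^{(m-j)}(Y).\]
This takes $O(n^2 2^n)$ operations in total. Expanding the product gives
\[\hat h^{(m)}(Y) = \sum_{A, B \subseteq Y,\ |A| + |B| = m} f_1(A) f_2(B),\]
which is exactly the zeta transform of $h^{(m)}(X) = \sum_{A \cup B = X,\ |A| + |B| = m} f_1(A) f_2(B)$. To see this, group the pairs $(A,B)$ by their union $X = A \cup B \subseteq Y$.

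Finally, invert the zeta transform with the Möbius transform: run Yates' algorithm again, but subtract $g(Y \setminus \{\ell\})$ instead of adding it. This recovers $h^{(m)}$ for every $m$, again in $O^*(2^n)$ time. The output is
\[\sF_1 \ostar \sF_2 = \{X : h^{(|X|)}(X) > 0\}.\]
This is correct because any pair with $A \cup B = X$ and $|A| + |B| = |X|$ must satisfy $A \cap B = \emptyset$. Conversely, every disjoint pair whose union is $X$ is counted with weight $1$, and all terms are nonnegative, so $h^{(|X|)}(X) > 0$ exactly when $X \in \sF_1 \ostar \sF_2$.

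The main obstacle is the grading by rank. A plain zeta-product-Möbius computation only yields the union product, which counts overlapping pairs as well. The cardinality bookkeeping is what removes them, and the step that needs care is checking that it commutes with the transforms: the zeta and Möbius transforms act on each graded piece separately, which justifies doing the product in the transformed domain degree by degree. With that in place, the total cost is $O(n^3 2^n)$ arithmetic operations on $O(n)$-bit integers, which is $O^*(2^n)$.
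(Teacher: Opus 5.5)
Your proof is correct. The paper gives no proof of this theorem and simply cites Bj\"orklund, Husfeldt, Kaski and Koivisto; your argument is exactly their ranked-transform proof, namely rank-graded zeta transforms, degree-wise pointwise products, M\"obius inversion, and reading off $X$ from $h^{(|X|)}(X)>0$ via $|A\cup B|=|A|+|B|-|A\cap B|$. One quibble: the arithmetic-operation count is actually $O(n^2 2^n)$ rather than your $O(n^3 2^n)$, though either bound is still $O^*(2^n)$.
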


\section{The algorithm}
\begin{proof}[Proof of \Cref{thm:main}]
Throughout the proof, $k$ is fixed and $n \to \infty$. We will only consider large-$k$ asymptotics at the very end after optimizing the running time. There is no harm in assuming that $t \le n$.

Fix $p \in (0,1)$ and let $q = 1-p$. We take $s_1 \coloneq q^kn$, $s_2 \coloneq s_3 \coloneq (1-q^k)n/2$, and $M = n \cdot \lceil 2^{-(1- q^k - pk)n/k + 2n^{2/3}+1}\rceil$.

\begin{algorithm}
\caption{An algorithm for $k$-set cover.}
\label{alg:k-set-cover}
Let $\sF_\downarrow$ be the downward closure of $\sF$

Sample $S \subseteq [n]$ by including each element independently with probability $p$

\If{$pn - n^{2/3} \le |S| \le pn + n^{2/3}$}
{
    Let $\mathcal{A}$ be a collection of $M$ uniformly random and independent subsets of $S$
    
	\ForEach{$t_1, t_2, t_3 \in \mathbb{Z}_{\geq 0}$ with $t_1 + t_2 + t_3 \leq t$}
	{
		Let $\sF_i \subseteq 2^{[n]}$ be the set family consisting of disjoint unions of $t_i$ sets in $\sF_\downarrow$, with total cardinality in the range $[s_i - 2n^{2/3}, s_i + 2n^{2/3}]$
	
		\ForEach{$A \in \mathcal{A}$}
		{
            $\sF_1' \gets \{U \setminus S: U \in \sF_1 , U \cap S = \emptyset \}$
            
            $\sF_2' \gets \{U \setminus S: U \in \sF_2, U \cap S = A\}$
            
            $\sF_3' \gets \{U \setminus S: U \in \sF_3, U \cap S = S \setminus A\}$
			
			\If{$[n] \setminus S \in \sF_1' \ostar \sF_2' \ostar \sF_3'$}
			{
				Accept.
			}
		}
	}
}
Reject.
\end{algorithm}

\paragraph{Correctness.}
We first claim that our algorithm has no false positives. Suppose it accepts; then there exist $P'_i \in \sF_i'$ for $i \in [3]$ with $P'_1 \sqcup P'_2 \sqcup P'_3 = [n] \setminus S$. By definition of $\sF_i'$, there then exist $P_i \in \sF_i$ with $P_1 \cap S = \emptyset$, $P_2 \cap S = A$, $P_3 \cap S = S \setminus A$. Therefore $P_1, P_2, P_3$ partition $[n]$, and by construction of $\sF_i$, this partition is a disjoint union of at most $t$ sets from the downward closure of $\sF$. This then lifts to a cover of $[n]$ with at most $t$ sets.

Conversely, suppose there exists a cover. Then there is a partition $[n] = \bigsqcup_{i=1}^\ell X_i$ into at most $t$ sets from the downward closure of $\sF$. Our algorithm will detect this partition whenever its parts admit a coarse partition, say $[\ell] = I_1 \sqcup I_2 \sqcup I_3$ with $P_i \coloneq \bigsqcup_{j \in I_i} X_j$, where
\[\bigl||P_i| - s_i\bigr| \le 2n^{2/3},\]
and where for some $A \in \mathcal{A}$, $P_1 \cap S = \emptyset, P_2 \cap S = A, P_3 \cap S = S \setminus A$. Indeed, the algorithm considers $(t_1,t_2,t_3)=(|I_1|,|I_2|,|I_3|)$, whose sum is $\ell\leq t$. We now show that with high probability, some such coarse partition will exist.

We consider two random variables, $\sfY$ and $\sfZ$. The variable $\sfY$ counts the total size of the parts of $[n]$ that are disjoint from $S$, whereas $\sfZ$ counts the number of parts that intersect $S$:
\[ \sfY = \sum_{i \in [\ell]} |X_i| \cdot \mathbf{1}[X_i \cap S = \emptyset], \quad \sfZ = \sum_{i \in [\ell]} \mathbf{1}[X_i \cap S \neq \emptyset]. \]

\begin{lemma}
	With probability at least $1 - \exp(-n^{\Omega(1)})$ over the random choice of $S$, all of the following hold:
	\begin{center}
		$\sfY \ge q^k n - n^{2/3}$, \quad $\sfZ \ge (1-q^k)(n/k) - n^{2/3}$ \quad and \quad $|S|\in [pn - n^{2/3}, pn + n^{2/3}]$.
	\end{center}
\end{lemma}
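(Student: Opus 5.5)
We will prove the three bounds separately, each failing with probability $\exp(-n^{\Omega(1)})$, and then take a union bound. Each of $|S|$, $\sfY$ and $\sfZ$ is a function of the independent indicators $\mathbf{1}[x\in S]$ for $x\in[n]$. The plan is therefore to bound each expectation from below and apply a concentration inequality.

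\emph{The bound on $|S|$.} Here $|S|$ is a sum of $n$ independent Bernoulli$(p)$ variables. Hoeffding's inequality gives $\Pr[\,||S|-pn|>n^{2/3}\,]\le 2\exp(-2n^{1/3})$.

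\emph{The expectations of $\sfY$ and $\sfZ$.} Since $|X_i|\le k$, we have $\Pr[X_i\cap S=\emptyset]=q^{|X_i|}\ge q^k$. Using $\sum_i|X_i|=n$, this gives
\[\E[\sfY]=\sum_i |X_i|q^{|X_i|}\ge q^k n.\]
For $\sfZ$ we need the inequality $1-q^{m}\ge m(1-q^k)/k$ for $1\le m\le k$. This holds because $m\mapsto 1-q^m$ is concave in $m$ and vanishes at $m=0$, so $(1-q^m)/m$ is nonincreasing in $m$. Summing over the parts,
\[\E[\sfZ]=\sum_i (1-q^{|X_i|})\ge \sum_i |X_i|(1-q^k)/k=(1-q^k)n/k.\]

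\emph{Concentration of $\sfY$ and $\sfZ$.} The main point to check is that $\sfY$ and $\sfZ$ have bounded differences. Changing whether a single element $x$ lies in $S$ affects only the one part $X_i$ containing $x$, because the $X_i$ are disjoint. So $\sfY$ changes by at most $|X_i|\le k$, and $\sfZ$ changes by at most $1$. McDiarmid's inequality then gives
\[\Pr[\sfY\le \E\sfY-n^{2/3}]\le \exp(-2n^{4/3}/(nk^2)),\]
which is $\exp(-n^{\Omega(1)})$ since $k$ is fixed. The same inequality with difference bound $1$ handles $\sfZ$.

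Alternatively, we can avoid McDiarmid. The events $\{X_i\cap S=\emptyset\}$ are independent across $i$ because the parts are disjoint, so Hoeffding's inequality applies directly to the sums defining $\sfY$ and $\sfZ$. Their summands lie in $[0,k]$ and $[0,1]$ respectively, and there are $\ell\le n$ of them.

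A union bound over the three failure events finishes the proof. The only nonroutine step is the concavity bound used for $\E[\sfZ]$.
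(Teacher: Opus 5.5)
Your proof is correct and follows the paper's argument: the same lower bounds on $\E[\sfY]$ and $\E[\sfZ]$ (including the monotonicity of $(1-q^j)/j$), Hoeffding's inequality over the independent events $\{X_i\cap S=\emptyset\}$, and a union bound. Your McDiarmid variant is an equally valid alternative. It gives $\exp(-2n^{1/3}/k^2)$ rather than the paper's $\exp(-2n^{1/3}/k)$, which the paper obtains via $\sum_i|X_i|^2\le nk$; the difference is immaterial since $k$ is fixed.
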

\begin{proof}
	By the union bound, it suffices to bound the failure probability of each event.
	For $\sfY$, since the event $X_i \cap S = \emptyset$ occurs with probability $q^{|X_i|}$,
	\[ \E[\sfY] = \sum_{i \in [\ell]} |X_i| \cdot q^{|X_i|} \geq \sum_{i \in [\ell]} |X_i| \cdot q^k = q^k n. \]
	Moreover, these events are independent, so Hoeffding's inequality gives

	\[ \Pr[\sfY < q^k n - n^{2/3}] \le \exp\left (\frac{-2n^{4/3}}{\sum_{i \in [\ell]} |X_i|^2}\right ) \le \exp\left (\frac{-2n^{4/3}}{nk}\right ) =  \exp\left(\frac{-2 n^{1/3}}{k}\right). \]

	For $\sfZ$, we have
	\[ \E[\sfZ] = \sum_{i\in [\ell]} (1-q^{|X_i|}) \ge \sum_{i\in [\ell]} \frac{|X_i|}{k}(1-q^k) = \frac{n}{k}(1-q^k) \]
	where we used the inequality $(1-q^j)/j \ge (1-q^k)/k$ for $j \le k$. Applying Hoeffding's inequality again gives
	\[ \Pr \left[\sfZ < (1-q^k)\frac{n}{k} - n^{2/3} \right] \le \exp \left( \frac{-2n^{4/3}}{\ell} \right) \le \exp(-2n^{1/3}). \]
	Finally, a standard Hoeffding bound shows that $\bigl||S| - pn\bigr| < n^{2/3}$ also holds with high probability.
\end{proof}

We now condition on these events holding.

Let $I \subseteq [\ell]$ be the set of indices of the parts that are disjoint from $S$.
The lower bound on $\sfY$ implies that there is a subcollection $J \subseteq I$ whose corresponding parts have total size in the range $[q^k n -n^{2/3}, q^k n -n^{2/3} + k]$. Let $P_1=\bigsqcup_{j\in J} X_j$ be their union. Then $P_1$ lies in the corresponding size range, so $P_1 \in \sF'_1$. We will distribute the remaining parts indexed by $I \setminus J$ between $P_2$ and $P_3$ later.

Every part indexed by $[\ell]\setminus I$ has a nonempty intersection with $S$. Let $R=\bigsqcup_{i\notin I} X_i$ be the union of these parts. Choose a random subcollection by including each such part independently with probability $1/2$, and let $T$ be its union. By Hoeffding's inequality,
\[\Pr[\bigl||T| - |R|/2\bigr| \ge n^{2/3}] \le 2 \exp(-2n^{1/3}/k).\]
This, combined with the fact that $\sfZ \ge (1-q^k)(n/k) -n^{2/3}$, shows that for sufficiently large $n$ there are at least $2^{(1-q^k)(n/k) -n^{2/3}-1}$ such sets $T$ with $\bigl||T| - |R|/2\bigr| < n^{2/3}$. Let $\mathcal{T}$ be the collection of all such $T$.

Because every part indexed by $[\ell]\setminus I$ intersects $S$, any two distinct $T,T' \in \mathcal{T}$ satisfy $T \cap S \neq T' \cap S$. It follows that the probability that a uniformly random $A \subseteq S$ satisfies $A = S \cap T$ for some $T \in \mathcal{T}$ is at least
\[\frac{2^{(1-q^k)(n/k) - n^{2/3}-1}}{2^{|S|}} \ge \frac{2^{(1-q^k)(n/k) - n^{2/3}-1}}{2^{pn + n^{2/3}}} = 2^{(1- q^k - pk)(n/k) - 2n^{2/3}-1}.\]
Since $\mathcal{A}$ contains $M = n \cdot \lceil 2^{-(1- q^k - pk)n/k + 2n^{2/3}+1}\rceil$ random sets, with probability at least $1-\mathrm{e}^{-n}$, some $A \in \mathcal{A}$ will satisfy $S \cap T = A$ for some $T \in \mathcal{T}$. For this $T$, since $\bigl| |T| - |R|/2 \bigr| < n^{2/3}$, we have
\[ \bigl| |T| - |R \setminus T| \bigr| < 2n^{2/3}. \]
We now show that the algorithm accepts for such an $A$. Starting with the two sides $T$ and $R \setminus T$, process the parts indexed by $I \setminus J$ one at a time, adding each part to the currently smaller side. If the current size difference is $d$, adding a part $X_i$ to the smaller side changes the difference to $\bigl|d-|X_i|\bigr| \leq \max\{d,k\}$. Thus, for sufficiently large $n$, the resulting unions $P_2$ and $P_3$ satisfy $\bigl||P_2|-|P_3|\bigr| < 2n^{2/3}$.

Together with $\bigl||P_1| - q^k n\bigr| \leq n^{2/3}$, this implies that $\bigl||P_2| - (1-q^k)(n/2)\bigr| \leq (3/2) n^{2/3}$; the same bound holds for $P_3$. Therefore this coarse partition satisfies the necessary conditions: $|P_i| \in [s_i \pm 2n^{2/3}]$, $P_1 \cap S = \emptyset$, $P_2 \cap S = A \in \mathcal{A}$, and $P_3 \cap S = S \setminus A$.

\paragraph{Running time.}
Assuming $q^k < 1/2$, the preprocessing step (line 6) takes time
\[ 2^{H((1-q^k)/2)n + o(n)} + 2^{H(q^k)n + o(n)}, \]
by computing for each $1 \le b \le t_i$ the subsets of $[n]$ of size at most $s_i + 2n^{2/3}$ that can be partitioned into $b$ sets from $\sF_\downarrow$ with a standard dynamic program. We then perform at most $|\mathcal{A}|\cdot (t+1)^3$ subset convolutions on the universe $[n] \setminus S$, each of which takes time $O^*(2^{n - |S|}) = 2^{n-pn + O(n^{2/3})}$. Note also that the construction of each $\sF_i'$ can be done in time $O^*(2^{n - |S|})$ by checking for each subset of $[n] \setminus S$ if its appropriate extension to $[n]$ belongs to $\sF_i$. Thus the final running time is
\begin{align*}
&2^{H((1-q^k)/2)n + o(n)} + 2^{H(q^k)n + o(n)}  + 2^{(-1+ q^k + pk)n/k + o(n)} \cdot 2^{n-pn+o(n)} \\
&= 2^{H((1-q^k)/2)n + o(n)} + 2^{H(q^k)n + o(n)} + 2^{(-1+ q^k + k)n/k + o(n)}.
\end{align*}

To optimize this, we choose $q = \left( \frac{2 \ln 2}{k}\right)^{1/(2k)}$. For $k \geq 6$, this choice satisfies $q \in (0,1)$ and $q^k=\sqrt{2\ln 2/k}<1/2$. The first term then has exponential rate
\[H((1-q^k)/2) < 1-q^{2k}/(2 \ln 2) = 1-1/k,\]
where we used the inequality $H((1-x)/2) < 1-x^2/(2 \ln 2)$.
The third term has rate
\[1-1/k + q^k/k= 1-1/k+(2 \ln 2)^{1/2}/k^{3/2}.\]
Moreover,
\[
H(q^k)=O(q^k\log(1/q^k))
=O((\log k)/\sqrt{k})=o_k(1),
\]
so the rate of the second term is negligible compared to the others for sufficiently large $k$. In conclusion, the third term dominates and any $\delta_k < 1/k - (2\ln 2)^{1/2} / k^{3/2}$ gives the desired time bound.
\end{proof}

\section*{AI Disclosure}

ChatGPT 5.6 was used to check for and correct typos in the writing, and to help verify our understanding of Bj\"orklund's algorithm described in footnote~\ref{footnote} above.

\bibliographystyle{alphaurl}

\bibliography{main.bib}

\end{document}